\newcommand{\Sig}{\Sigma}
\newcommand{\eps}{\varepsilon}
\newcommand{\rev}{R}
\newcommand{\deter}{D}
\newcommand{\cA}{\mathcal{A}}
\newcommand{\bA}{\mathbf{A}}
\newcommand{\cD}{\mathcal{D}}
\newcommand{\cN}{\mathcal{N}}
\newcommand{\emp}{\emptyset}
\newcommand{\qedb}{\hfill$\blacksquare$} 
\newcommand{\ol}{\overline}
\newcommand{\im}{\operatorname{im}}
\newcommand{\coim}{\operatorname{coim}}
\newcommand{\preim}{\operatorname{preim}}
\newcommand{\noin}{\noindent}
\newcommand{\be}{\begin{enumerate}}
\newcommand{\ee}{\end{enumerate}}
\newcommand{\bi}{\begin{itemize}}
\newcommand{\ei}{\end{itemize}}
\title{Maximal Syntactic Complexity of Regular Languages Implies Maximal Quotient  Complexities of Atoms\thanks{This work was supported by the Natural Sciences and Engineering Research Council of Canada under grant No.~OGP0000871.}}
\author{Janusz Brzozowski and Gareth Davies}
\institute{David R. Cheriton School of Computer Science, University of Waterloo, \\
Waterloo, ON, Canada N2L 3G1\\
{\tt \{brzozo,gdavies\}@uwaterloo.ca}
}
\begin{document}
\maketitle

\begin{abstract}
We relate two measures of complexity of regular languages.
The first is syntactic complexity, that is, the cardinality of the syntactic semigroup of the language.
That semigroup is isomorphic to the semigroup of transformations of states induced by non-empty words in the minimal deterministic finite automaton accepting the language. If the language has $n$ left quotients (its minimal automaton has $n$ 
states), then its syntactic complexity is at most $n^n$ and this bound is tight.
The second measure consists of the  quotient (state) complexities of the atoms of the language, where atoms  
 are non-empty intersections of complemented and uncomplemented quotients.
A regular language has at most $2^n$ atoms and this bound is tight. 
The maximal quotient complexity of any atom  with $r$ complemented 
quotients is $2^n-1$, if $r=0$ or $r=n$, and 
$1+\sum_{k=1}^{r} \sum_{h=k+1}^{k+n-r} \binom{h}{n} \binom{k}{h}$,
otherwise. 
We prove that if a language has maximal syntactic complexity, then it has $2^n$ atoms
and each atom has maximal quotient complexity, but the converse is false.
\medskip

\noin
{\bf Keywords:}
atom, finite automaton, quotient complexity, regular language, reversal, semigroup, state complexity, syntactic complexity

\end{abstract}

\section{Introduction}
In recent years much of the theory of the so-called \emph{descriptional complexity} of regular languages has been concerned with state complexity. The \emph{state complexity} of a regular language~\cite{Yu01} is the number of states in the minimal complete deterministic finite automaton (\emph{DFA}) recognizing the language. An equivalent notion is \emph{quotient complexity}~\cite{Brz10}, which is the number of left quotients of the language, where the \emph{left quotient} (or simply \emph{quotient}) of a language $L$ over an alphabet $\Sig$ by a word $w\in\Sig^*$ is $w^{-1}L=\{x\mid wx\in L\}$.
The \emph{(state/quotient) complexity of an operation} on regular languages is the maximal  complexity of the language resulting from the operation as a function of the  complexities of the arguments.
The operations considered may be \emph{basic}, for example, union, star or product (concatenation),
or \emph{combined}, for example, star of union or reversal of product.
Basic operations were first studied by Maslov~\cite{Mas70} in 1970, and later by Yu, Zhuang and K. Salomaa~\cite{YZS94} in 1994. 
Combined operations were first  considered by A. Salomaa, K. Salomaa and Yu~\cite{SSY07} in 2007. See also the 2012 paper on this topic by Brzozowski~\cite{Brz12} and the references in that paper.

It has been suggested in~\cite{BrYe11} by Brzozowski and Ye that syntactic complexity can be a useful measure of complexity.
It has its roots in the   \emph{Myhill congruence}~\cite{Myh57} $\approx_L$ defined by a language $L\subseteq \Sig^*$   as follows: For $x,y\in\Sig^*$,
\begin{equation*}
x \approx_L y \text{ if and only if } uxv\in L  \Leftrightarrow uyv\in L\text{ for all } u,v\in\Sig^*.
\end{equation*}
The 
\emph{syntactic semigroup}~\cite{Pin97} of $L$ is the quotient semigroup 
\mbox{$\Sig^+/\approx_L$}.
It is isomorphic to the semigroup of transformations of states by non-empty words in the minimal DFA of $L$.
This semigroup is called the \emph{transition semigroup} and  is often used to represent the syntactic semigroup.
\emph{Syntactic complexity} is the cardinality of the syntactic semigroup. Syntactic complexity may be able to distinguish between two regular languages with the same quotient complexity. For example, a language with three quotients may have syntactic complexity as low as 2 or as high as 27.

Atoms of regular languages were introduced in 2011~\cite{BrTa11}, and their quotient complexities were studied  in 2012~\cite{BrTa12}.
An \emph{atom}\footnote{ The definition of~\cite{BrTa11}, has been slightly modified in~\cite{BrTa12}.
The newer model, which admits up to $2^n$ atoms, is used here.} of a regular language $L$ with quotients $K_0,\ldots,K_{n-1}$ is a non-empty intersection of the form 
$\widetilde{K_0}\cap \cdots \cap \widetilde{K_{n-1}}$, 
where $\widetilde{K_i}$ is either $K_i$ or $\ol{K_i}$, and $\ol{K_i}=\Sig^*\setminus K_i$. Thus the number of atoms is bounded from above by $2^n$, and it was proved in~\cite{BrTa12} that this bound is tight.
Since every quotient of $L$ (including $L$ itself)  and every quotient of every atom of $L$ is a union of atoms, the atoms 
of $L$ are its basic building blocks. 
It was proved in~\cite{BrTa12} that the  quotient complexity of the atoms with 0 or $n$ complemented quotients is bounded from above by $2^n-1$, and that of any atom with $r$ complemented quotients, where $1\le r\le n-1$, by 
\begin{equation}
\label{eq:atombounds}
f(n,r)=1 + \sum_{k=1}^{r} \sum_{h=k+1}^{n-r+k} \binom{h}{n} \binom{k}{h}.
\end{equation}
These bounds are tight~\cite{BrTa12}.
When we say that a language has \emph{maximal quotient complexity of atoms} we mean that (a) it has all $2^n$ atoms, and (b) they all reach their maximal bounds, as stated above.

It was argued in~\cite{Brz12} that it is useful to consider several measures of complexity of regular languages, including syntactic complexity and atom complexity, along with the more traditional measures such as the state complexity of operations.
If one does consider several measures, the question arises whether these measures are related. 
There are only two such results. The first is the following proposition which restates for our purposes  the 2004 result of A. Salomaa, Wood, and Yu~\cite{SWY04}:
\begin{proposition}[Syntactic Semigroup and Reversal]
\label{prop:SWY}
Maximal syntactic complexity of a regular language implies maximal quotient complexity of its reverse.
\end{proposition}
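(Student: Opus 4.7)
The plan is to combine the standard reverse construction with the subset construction on the minimal DFA of $L$, and to use maximality of the transition semigroup to force every subset of states to be reachable.

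Let $\cD=(Q,\Sig,\delta,q_0,F)$ be the minimal DFA of $L$, with $|Q|=n$. Maximality of syntactic complexity says the transition semigroup has $n^n$ elements; since the full transformation monoid on $Q$ also has cardinality $n^n$, every function $f\colon Q\to Q$ equals $\delta_w$ for some $w\in\Sig^+$. A DFA $\cD'$ for $L^\rev$ is obtained by reversing $\cD$ and applying the subset construction: its states are subsets of $Q$ of the form $\delta_w^{-1}(F)$, the initial state is $F$, and a subset $S$ is final iff $q_0\in S$.

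The key step is to show that every subset $S\subseteq Q$ is reachable as $\delta_w^{-1}(F)$ for some $w$. Assuming $n\ge 2$, minimality forces $F$ to be a proper nonempty subset of $Q$, so I can pick $p\in F$ and $p'\in Q\setminus F$. For any $S\subseteq Q$, define $f\colon Q\to Q$ by $f(q)=p$ if $q\in S$ and $f(q)=p'$ otherwise; then $f^{-1}(F)=S$, and fullness of the transition semigroup yields a word $w$ with $\delta_w=f$. To finish, I verify that distinct subsets give distinct quotients: if $S_1\ne S_2$ and $q$ lies in their symmetric difference, minimality of $\cD$ supplies a word $v$ with $\delta_v(q_0)=q$, and this $v$ separates $S_1$ from $S_2$ in $\cD'$ because the images $\delta_v^{-1}(S_i)$ differ precisely on whether they contain $q_0$. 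Hence $\cD'$ has exactly $2^n$ states, attaining the maximal quotient complexity of $L^\rev$.

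The main obstacle is conceptual rather than computational: one must see that a single stroke --- the existence of a word realising the two-valued function above --- delivers reachability of every subset, after which minimality of $\cD$ handles distinguishability automatically. The only edge case is $n=1$, where the bound $2^n$ is unattainable and the statement is vacuous; all other cases are covered uniformly by the construction above.
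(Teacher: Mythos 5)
Your proof is correct, but note that the paper itself offers no proof of this proposition: it is a restatement of a result of Salomaa, Wood and Yu, and the authors simply cite \cite{SWY04}. So there is nothing in the paper to compare against line by line; what you have written is the standard self-contained argument, and it is sound. The two pillars are exactly right: (i) syntactic complexity $n^n$ forces the transition semigroup to be all of $Q^Q$, so for any $S\subseteq Q$ the two-valued map sending $S$ into $F$ and $Q\setminus S$ into $Q\setminus F$ is realized by some word, making every subset (including $\emp$) reachable in $\cD^{\rev\deter}$ as a set of the form $\delta_w^{-1}(F)$; (ii) minimality of $\cD$ gives, for any $q$ in the symmetric difference of $S_1$ and $S_2$, a word $v$ with $\delta_v(q_0)=q$, and since $q_0\in\delta_v^{-1}(S_i)$ iff $q\in S_i$, the two subsets are distinguishable. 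Two small remarks. First, with the paper's convention $\delta_{xa}=\delta_a\circ\delta_x$, the word that carries the state $S_i$ of $\cD^{\rev\deter}$ to $\delta_v^{-1}(S_i)$ is $v$ read in reverse; this is a bookkeeping detail that does not affect the argument, but you should fix a convention and state it. Second, the case $n=1$ is not vacuous --- the hypothesis is satisfied by $\emp$ and $\Sig^*$ --- but it is trivial, since the reverse then has one quotient and the maximal quotient complexity of the reverse of a one-quotient language is $1$, which is attained; ``degenerate'' would be a better word than ``vacuous.''
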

In other words, if $L$ has syntactic complexity $n^n$, then the quotient complexity of $L^R$, the reverse of $L$, is necessarily $2^n$.

The converse of Proposition~\ref{prop:SWY} is false.
It was shown by Jir\'askov\'a and \v{S}ebej that the DFA\footnote{In the figure, if $n=2$, then $a$ transposes states 0 and 1, and $b$ is as shown. 
 For $n=3$, state 2 goes to itself under $b$.
For $n=4$, state 3 goes to itself under $a$.}  of Fig.~\ref{fig:rev}
with $n\ge 2$
 meets the upper bound for reversal~\cite{JiSe12}.
However, it is well known 
that  at least three inputs are required to generate all $n^n$ transformations when $n \ge 3$.
Thus the cardinality of the syntactic semigroup of the language of the DFA of Fig.~\ref{fig:rev}
is strictly smaller than $n^n$.

 \begin{figure}[t]
 \begin{center}
 \setlength{\unitlength}{0.00043745in}
\begingroup\makeatletter\ifx\SetFigFont\undefined%
\gdef\SetFigFont#1#2#3#4#5{%
  \reset@font\fontsize{#1}{#2pt}%
  \fontfamily{#3}\fontseries{#4}\fontshape{#5}%
  \selectfont}%
\fi\endgroup%
{\renewcommand{\dashlinestretch}{30}
\begin{picture}(8460,1736)(0,-10)
\put(4998,748){\makebox(0,0)[lb]{\smash{{\SetFigFont{7}{8.4}{\rmdefault}{\mddefault}{\updefault}$4$}}}}
\put(5105.000,1286.333){\arc{333.333}{2.2143}{7.2105}}
\blacken\thicklines
\path(5236.377,1289.553)(5205.000,1153.000)(5302.310,1253.806)(5236.377,1289.553)
\thinlines
\put(6995.000,1278.333){\arc{333.333}{2.2143}{7.2105}}
\blacken\thicklines
\path(7126.377,1281.553)(7095.000,1145.000)(7192.310,1245.806)(7126.377,1281.553)
\thinlines
\put(8052.000,1331.333){\arc{333.333}{2.2143}{7.2105}}
\blacken\thicklines
\path(8183.377,1334.553)(8152.000,1198.000)(8249.310,1298.806)(8183.377,1334.553)
\thinlines
\put(1863,830){\ellipse{630}{630}}
\put(8092,833){\ellipse{720}{720}}
\put(793,828){\ellipse{630}{630}}
\put(8093,831){\ellipse{630}{630}}
\put(2956,820){\ellipse{630}{630}}
\put(3999,820){\ellipse{630}{630}}
\put(5094,828){\ellipse{630}{630}}
\put(6999,821){\ellipse{630}{630}}
\path(1107,830)(1557,830)
\blacken\thicklines
\path(1422.000,792.500)(1557.000,830.000)(1422.000,867.500)(1422.000,792.500)
\thinlines
\path(2172,823)(2622,823)
\blacken\thicklines
\path(2487.000,785.500)(2622.000,823.000)(2487.000,860.500)(2487.000,785.500)
\thinlines
\path(4325,830)(4775,830)
\blacken\thicklines
\path(4640.000,792.500)(4775.000,830.000)(4640.000,867.500)(4640.000,792.500)
\thinlines
\path(5420,830)(5870,830)
\blacken\thicklines
\path(5735.000,792.500)(5870.000,830.000)(5735.000,867.500)(5735.000,792.500)
\thinlines
\path(6208,838)(6658,838)
\blacken\thicklines
\path(6523.000,800.500)(6658.000,838.000)(6523.000,875.500)(6523.000,800.500)
\thinlines
\path(7318,823)(7768,823)
\blacken\thicklines
\path(7633.000,785.500)(7768.000,823.000)(7633.000,860.500)(7633.000,785.500)
\thinlines
\path(12,830)(455,830)
\blacken\thicklines
\path(320.000,792.500)(455.000,830.000)(320.000,867.500)(320.000,792.500)
\thinlines
\path(1670,1093)(1669,1094)(1667,1098)
	(1663,1103)(1657,1111)(1648,1122)
	(1638,1136)(1626,1151)(1612,1167)
	(1597,1184)(1581,1200)(1563,1217)
	(1545,1233)(1525,1247)(1503,1261)
	(1479,1272)(1454,1282)(1425,1290)
	(1395,1294)(1362,1295)(1329,1292)
	(1297,1285)(1267,1276)(1239,1264)
	(1213,1250)(1188,1235)(1165,1219)
	(1143,1202)(1122,1184)(1102,1166)
	(1083,1148)(1066,1131)(1051,1115)
	(1038,1101)(1027,1090)(1010,1070)
\blacken\thicklines
\path(1068.860,1197.149)(1010.000,1070.000)(1126.005,1148.575)(1068.860,1197.149)
\thinlines
\path(3207,1017)(3209,1018)(3213,1021)
	(3219,1026)(3229,1032)(3241,1040)
	(3256,1049)(3272,1059)(3291,1069)
	(3310,1079)(3332,1088)(3355,1096)
	(3381,1103)(3410,1109)(3442,1113)
	(3477,1115)(3512,1114)(3545,1111)
	(3575,1106)(3603,1099)(3628,1092)
	(3652,1083)(3674,1074)(3695,1065)
	(3714,1056)(3731,1047)(3745,1039)(3770,1025)
\blacken\thicklines
\path(3633.889,1058.242)(3770.000,1025.000)(3670.534,1123.680)(3633.889,1058.242)
\thinlines
\path(3755,613)(3753,612)(3749,609)
	(3743,604)(3733,598)(3721,590)
	(3706,581)(3690,571)(3671,561)
	(3652,551)(3630,542)(3607,534)
	(3581,527)(3552,521)(3520,517)
	(3485,515)(3450,516)(3417,519)
	(3387,524)(3359,531)(3334,538)
	(3310,547)(3288,556)(3267,565)
	(3248,574)(3231,583)(3217,591)(3192,605)
\blacken\thicklines
\path(3328.111,571.758)(3192.000,605.000)(3291.466,506.320)(3328.111,571.758)
\thinlines
\path(7820,598)(7819,598)(7818,597)
	(7815,595)(7811,592)(7804,588)
	(7796,582)(7785,576)(7772,567)
	(7756,557)(7737,546)(7716,534)
	(7692,520)(7666,505)(7638,489)
	(7607,472)(7574,454)(7540,436)
	(7504,418)(7466,399)(7426,380)
	(7385,362)(7342,344)(7298,326)
	(7252,308)(7204,291)(7154,274)
	(7102,258)(7048,243)(6991,228)
	(6931,215)(6869,202)(6803,190)
	(6733,180)(6660,170)(6584,162)
	(6504,156)(6420,151)(6334,149)
	(6245,148)(6164,149)(6082,153)
	(6002,157)(5923,164)(5845,171)
	(5769,180)(5696,190)(5624,201)
	(5555,212)(5487,225)(5421,238)
	(5357,252)(5295,266)(5234,282)
	(5174,297)(5115,313)(5058,330)
	(5002,346)(4947,363)(4893,381)
	(4840,398)(4788,416)(4738,433)
	(4689,450)(4642,467)(4597,484)
	(4554,500)(4513,515)(4475,530)
	(4440,544)(4407,556)(4378,568)
	(4352,578)(4329,588)(4309,596)
	(4293,602)(4279,608)(4269,612)
	(4261,615)(4250,620)
\blacken\thicklines
\path(4388.417,598.275)(4250.000,620.000)(4357.382,529.998)(4388.417,598.275)
\thinlines
\path(2688,629)(2687,628)(2685,626)
	(2681,622)(2674,616)(2665,608)
	(2653,597)(2639,585)(2621,569)
	(2601,553)(2579,534)(2554,515)
	(2528,494)(2500,474)(2470,453)
	(2439,432)(2406,412)(2372,392)
	(2336,373)(2299,355)(2259,338)
	(2216,323)(2171,309)(2123,296)
	(2071,285)(2017,277)(1960,271)
	(1901,269)(1846,270)(1791,273)
	(1738,279)(1687,287)(1638,297)
	(1592,309)(1548,322)(1506,336)
	(1465,351)(1426,367)(1389,384)
	(1353,402)(1318,420)(1284,438)
	(1251,457)(1220,476)(1190,494)
	(1162,512)(1136,529)(1113,544)
	(1092,558)(1074,571)(1059,581)
	(1046,590)(1037,597)(1023,607)
\blacken\thicklines
\path(1154.650,559.048)(1023.000,607.000)(1111.058,498.018)(1154.650,559.048)
\put(725,1520){\makebox(0,0)[lb]{\smash{{\SetFigFont{7}{8.4}{\familydefault}{\mddefault}{\updefault}$b$}}}}
\put(6364,935){\makebox(0,0)[lb]{\smash{{\SetFigFont{7}{8.4}{\familydefault}{\mddefault}{\updefault}$a$}}}}
\put(2307,935){\makebox(0,0)[lb]{\smash{{\SetFigFont{7}{8.4}{\familydefault}{\mddefault}{\updefault}$a$}}}}
\put(1235,935){\makebox(0,0)[lb]{\smash{{\SetFigFont{7}{8.4}{\familydefault}{\mddefault}{\updefault}$a$}}}}
\put(1325,1370){\makebox(0,0)[lb]{\smash{{\SetFigFont{7}{8.4}{\familydefault}{\mddefault}{\updefault}$b$}}}}
\put(3402,1235){\makebox(0,0)[lb]{\smash{{\SetFigFont{7}{8.4}{\familydefault}{\mddefault}{\updefault}$b$}}}}
\put(3425,275){\makebox(0,0)[lb]{\smash{{\SetFigFont{7}{8.4}{\familydefault}{\mddefault}{\updefault}$b$}}}}
\put(1903,73){\makebox(0,0)[lb]{\smash{{\SetFigFont{7}{8.4}{\familydefault}{\mddefault}{\updefault}$a$}}}}
\put(4445,935){\makebox(0,0)[lb]{\smash{{\SetFigFont{7}{8.4}{\familydefault}{\mddefault}{\updefault}$a$}}}}
\put(5915,792){\makebox(0,0)[lb]{\smash{{\SetFigFont{7}{8.4}{\familydefault}{\mddefault}{\updefault}$\cdots$}}}}
\put(5569,928){\makebox(0,0)[lb]{\smash{{\SetFigFont{7}{8.4}{\familydefault}{\mddefault}{\updefault}$a$}}}}
\put(7421,935){\makebox(0,0)[lb]{\smash{{\SetFigFont{7}{8.4}{\familydefault}{\mddefault}{\updefault}$a$}}}}
\put(6125,216){\makebox(0,0)[lb]{\smash{{\SetFigFont{7}{8.4}{\familydefault}{\mddefault}{\updefault}$a$}}}}
\put(6913,1520){\makebox(0,0)[lb]{\smash{{\SetFigFont{7}{8.4}{\familydefault}{\mddefault}{\updefault}$b$}}}}
\put(7993,1550){\makebox(0,0)[lb]{\smash{{\SetFigFont{7}{8.4}{\familydefault}{\mddefault}{\updefault}$b$}}}}
\put(5045,1520){\makebox(0,0)[lb]{\smash{{\SetFigFont{7}{8.4}{\familydefault}{\mddefault}{\updefault}$b$}}}}
\put(7833,767){\makebox(0,0)[lb]{\smash{{\SetFigFont{6}{7.2}{\familydefault}{\mddefault}{\updefault}$n-1$}}}}
\put(6745,767){\makebox(0,0)[lb]{\smash{{\SetFigFont{6}{7.2}{\familydefault}{\mddefault}{\updefault}$n-2$}}}}
\put(732,756){\makebox(0,0)[lb]{\smash{{\SetFigFont{7}{8.4}{\rmdefault}{\mddefault}{\updefault}$0$}}}}
\put(1789,740){\makebox(0,0)[lb]{\smash{{\SetFigFont{7}{8.4}{\rmdefault}{\mddefault}{\updefault}$1$}}}}
\put(2884,740){\makebox(0,0)[lb]{\smash{{\SetFigFont{7}{8.4}{\rmdefault}{\mddefault}{\updefault}$2$}}}}
\put(3926,740){\makebox(0,0)[lb]{\smash{{\SetFigFont{7}{8.4}{\rmdefault}{\mddefault}{\updefault}$3$}}}}
\thinlines
\put(792.000,1293.333){\arc{333.333}{2.2143}{7.2105}}
\blacken\thicklines
\path(923.377,1296.553)(892.000,1160.000)(989.310,1260.806)(923.377,1296.553)
\end{picture}
}
 \end{center}
 \caption{The DFA of a language meeting the bound $2^n$ for reversal.}  
 \label{fig:rev}
 \end{figure}
 
 The second result is the 2011 proposition of Brzozowski and Tamm~\cite{BrTa11,BrTa12}
\begin{proposition}[Number of Atoms and Reversal]
\label{prop:BrTa}
The number of atoms of a regular language is equal to the quotient complexity of its reverse.
\end{proposition}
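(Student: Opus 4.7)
The plan is to apply Brzozowski's double-reversal determinization to the minimal DFA of $L$. Let $\cA=(Q,\Sig,\delta,q_0,F)$ be the minimal DFA of $L$, with $Q=\{q_0,\ldots,q_{n-1}\}$, and write $K_i$ for the quotient of $L$ accepted from state $q_i$. Reversing $\cA$ yields an NFA for $L^R$ with initial state set $F$ and sole accepting state $q_0$; the subset construction then produces a DFA $\cD$ accepting $L^R$. I will show that (a) the reachable states of $\cD$ are in bijection with the atoms of $L$, and (b) $\cD$ restricted to these reachable states is minimal, so its state count equals the quotient complexity of $L^R$.

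For (a), I would first compute by induction on $|w|$ that the subset of $Q$ reached from the initial state $F$ on input $w$ in $\cD$ is
\[
 S_w \;=\; \{q_i \in Q : \delta(q_i, w^R) \in F\} \;=\; \{q_i \in Q : w^R \in K_i\}.
\]
This is the key identity of the proof: it states that $w^R$ lies in $K_i$ precisely when $q_i\in S_w$, so $w^R$ belongs to the intersection $A_{S_w} := \bigcap_{q_i\in S_w} K_i \;\cap\; \bigcap_{q_i\notin S_w} \ol{K_i}$, which is therefore non-empty and is an atom of $L$. Conversely, if $A_S$ is any non-empty atom and $x\in A_S$, then the defining conditions of $A_S$ force $S_{x^R}=S$, so $S$ is reachable. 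Hence $S\mapsto A_S$ is a bijection between reachable subsets of $\cD$ and atoms of $L$.

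For (b), I would invoke the standard Brzozowski minimality argument: any two distinct reachable subsets $S_1\neq S_2$ differ in some state $q_i$, and since $\cA$ is minimal, $q_i$ is reachable from $q_0$, i.e., there is $u\in\Sig^*$ with $\delta(q_0,u)=q_i$. Reading $u^R$ from $S_1$ and from $S_2$ in $\cD$ then lands in subsets that differ in whether they contain $q_0$, so exactly one is accepting. Therefore $\cD$ restricted to its reachable states is the minimal DFA of $L^R$, and its number of states---which equals the quotient complexity of $L^R$---equals the number of atoms of $L$.

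The bulk of the work is the bookkeeping in step~(a): one must be careful to keep track of initial versus final states, and of the direction of transitions, under reversal so that the identity $S_w=\{q_i : w^R\in K_i\}$ falls out cleanly. Once that identity is in hand, the bijection with atoms is immediate, and the minimality argument in step~(b) reduces to the familiar observation that reachability of every state in the minimal DFA of $L$ forces the subsets in the determinization of its reverse to be pairwise distinguishable.
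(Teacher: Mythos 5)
Your proof is correct and follows essentially the same route as the cited source: the paper itself only quotes this result from Brzozowski and Tamm, but its Proposition~\ref{prop:isomorphism} (the \'atomaton isomorphism $\cA^\rev \cong \cD^{\rev\deter}$, with $\cA^\rev$ minimal) encodes exactly your bijection between reachable subsets of the determinized reversal and non-empty intersections $A_S$, together with the standard double-reversal minimality argument. Your identity $S_w=\{q_i : w^R\in K_i\}$ and the distinguishability step via reachability of each $q_i$ in the minimal DFA are both sound, and they correctly account for the negative atom as the reachable empty subset.
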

 
The main result of this paper is the following theorem:
\begin{restatable}{theorem}{mainthm}{\bf (Syntactic Semigroup and Atoms)}
\label{thm:main}
Maximal  syntactic complexity of a regular language with $n$ quotients implies that the language has $2^n$ atoms and each atom has maximal quotient complexity.
\end{restatable}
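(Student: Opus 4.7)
\medskip

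\noindent\textbf{Proof plan.}
The plan is to establish the two claims of the theorem---that $L$ has $2^n$ atoms and that each atom attains its maximal bound---separately. The first claim is immediate: maximal syntactic complexity $n^n$ means the transition semigroup of the minimal DFA of $L$ coincides with the full transformation semigroup on its $n$ states. By Proposition~\ref{prop:SWY} the quotient complexity of $L^R$ is $2^n$, and by Proposition~\ref{prop:BrTa} this number equals the number of atoms of $L$, which is therefore $2^n$.

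For the second claim I would begin by giving an explicit description of the quotients of each atom. Identify the states of the minimal DFA with $\{0,\dots,n-1\}$ and write $T_w$ for the transformation of states induced by a word $w$; then $w^{-1} K_i = K_{T_w(i)}$, and since quotient commutes with complement, $w^{-1}\ol{K_i}=\ol{K_{T_w(i)}}$. For the atom $A_S=\bigcap_{i\in S}K_i \cap \bigcap_{i\notin S}\ol{K_i}$ this yields
\[
w^{-1} A_S \;=\; \bigcap_{j \in T_w(S)} K_j \;\cap\; \bigcap_{j \in T_w(S^c)} \ol{K_j},
\]
which is empty whenever $T_w(S)\cap T_w(S^c)\ne\emptyset$, and otherwise equals the union of those atoms $A_T$ of $L$ with $T_w(S)\subseteq T \subseteq \{0,\dots,n-1\}\setminus T_w(S^c)$. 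Hence each nonempty quotient of $A_S$ is determined by the disjoint pair $(T_w(S),T_w(S^c))$, and since distinct disjoint pairs $(X,Y)$ determine distinct intervals $[X,\{0,\dots,n-1\}\setminus Y]$ in the Boolean lattice, and all $2^n$ atoms are nonempty by the first claim, distinct pairs give distinct quotient languages.

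Since the transition semigroup is the full one, every transformation is realized, so every pair $(X,Y)$ of disjoint subsets with $|X|\le |S|$ and $|Y|\le |S^c|$ (with $X$, respectively $Y$, nonempty whenever $S$, respectively $S^c$, is nonempty) arises as $(T_w(S),T_w(S^c))$ for some $w$; and when $S,S^c$ are both nonempty, a transformation collapsing an element of $S$ and an element of $S^c$ to a common value yields the empty quotient as well. Parametrizing by $k=|Y|$ and $h=|X|+|Y|$, the number of realizable nonempty pairs is exactly the double sum in $f(n,r)$ with $r=|S^c|$, and adding $1$ for the empty quotient matches $f(n,r)$ for $1\le r\le n-1$. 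In the extreme cases $r=0$ and $r=n$, one coordinate is forced to be $\emptyset$, overlap cannot occur, and the count reduces to the $2^n-1$ nonempty subsets of the other coordinate, again matching the stated bound. The main obstacle is not conceptual but bookkeeping: organizing the realizable pairs so that the counts line up with the formula for $f(n,r)$ and treating the boundary cases $r\in\{0,n\}$ separately, where the empty quotient does not arise.
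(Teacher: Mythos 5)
Your proposal is correct, but it follows a genuinely different route from the paper's. You compute the quotients of each atom directly as languages: $w^{-1}A_S$ is determined by the pair $(T_w(S),T_w(S^c))$, is empty exactly when these two images overlap, and otherwise equals the union of the atoms $A_T$ with $T_w(S)\subseteq T\subseteq Q\setminus T_w(S^c)$; since the transition semigroup is the full transformation semigroup, every admissible disjoint pair is realized by a single word, distinct pairs give distinct quotients (here you correctly use the first claim, that all $2^n$ atoms are non-empty, so that distinct intervals of atoms have distinct unions), and the count of pairs plus the empty quotient yields $f(n,r)$, respectively $2^n-1$ at $r\in\{0,n\}$. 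The paper instead works inside the \'atomaton: it describes the transition function of $\cA$ on intervals (Lemma~\ref{lem:atrans}), factors some letter as a singular transformation composed with a permutation (Lemma~\ref{lem:denes}), deduces strong connectedness of intervals of a fixed type and reachability between adjacent types in $\cA^{\deter}_S$ (Lemma~\ref{lem:reach}), and then cites~\cite{BrTa12} for the fact that these reachability properties force the number of reachable intervals to meet the bound. Your approach is more elementary and self-contained: because you may invoke an arbitrary word inducing an arbitrary transformation, you need no generator decomposition or step-by-step reachability argument, and you perform the final count explicitly rather than outsourcing it; what the paper's machinery buys is an analysis that parallels the lower-bound constructions of~\cite{BrTa12} and isolates reusable structural facts about the \'atomaton. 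In a full write-up you should make explicit the standard fact that a partial intersection $\bigcap_{j\in X}K_j\cap\bigcap_{j\in Y}\ol{K_j}$ with $X\cap Y=\emp$ is the disjoint union of the atoms $A_T$ with $X\subseteq T\subseteq Q\setminus Y$, and carry out the bookkeeping you defer, but neither is a gap.
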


The fact that the number of atoms of $L$ (quotient complexity of $L^R$) is $2^n$ does not imply that each atom has maximal quotient complexity. 
 For example, the language of Fig.~\ref{fig:rev}
for $n=4$ (respectively, $n=5,6,7$) has no atoms of quotient complexity larger than 25 (respectively, 99, 298,1053), but the maximal quotient complexity is 43 (respectively, 141, 501, 1548).

The converse of Theorem~\ref{thm:main} is not true. The language $L$ of the minimal DFA of Fig.~\ref{fig:wit} meets all the quotient complexity bounds for the 8 atoms, but its syntactic complexity is 24, while the maximum is 27.
There are also many ternary examples with higher numbers of states.
 \begin{figure}[t]
 \begin{center}
 \setlength{\unitlength}{0.00043745in}
\begingroup\makeatletter\ifx\SetFigFont\undefined%
\gdef\SetFigFont#1#2#3#4#5{%
  \reset@font\fontsize{#1}{#2pt}%
  \fontfamily{#3}\fontseries{#4}\fontshape{#5}%
  \selectfont}%
\fi\endgroup%
{\renewcommand{\dashlinestretch}{30}
\begin{picture}(3078,1405)(0,-10)
\put(1497,109){\makebox(0,0)[lb]{\smash{{\SetFigFont{7}{8.4}{\familydefault}{\mddefault}{\updefault}$a,b$}}}}
\put(2755,649){\ellipse{630}{630}}
\put(1685,647){\ellipse{630}{630}}
\put(2758,647){\ellipse{540}{540}}
\path(12,649)(282,649)
\blacken\path(162.000,619.000)(282.000,649.000)(162.000,679.000)(162.000,619.000)
\path(1992,649)(2442,649)
\blacken\path(2322.000,619.000)(2442.000,649.000)(2322.000,679.000)(2322.000,619.000)
\path(912,649)(1362,649)
\blacken\path(1242.000,619.000)(1362.000,649.000)(1242.000,679.000)(1242.000,619.000)
\path(1497,911)(1496,913)(1493,916)
	(1488,923)(1480,932)(1469,944)
	(1456,960)(1441,977)(1424,995)
	(1405,1014)(1385,1034)(1364,1052)
	(1341,1071)(1318,1087)(1292,1103)
	(1265,1116)(1236,1128)(1205,1137)
	(1172,1142)(1137,1144)(1103,1141)
	(1070,1134)(1039,1124)(1011,1111)
	(986,1096)(962,1080)(940,1062)
	(919,1043)(900,1023)(881,1003)
	(864,983)(849,964)(835,946)
	(824,931)(814,918)(799,896)
\blacken\path(841.814,1012.047)(799.000,896.000)(891.387,978.247)(841.814,1012.047)
\path(2562,387)(2561,386)(2559,384)
	(2555,381)(2548,376)(2539,368)
	(2528,359)(2513,347)(2496,333)
	(2475,318)(2453,300)(2428,282)
	(2400,262)(2372,242)(2341,222)
	(2309,201)(2276,180)(2241,160)
	(2205,141)(2168,122)(2129,104)
	(2089,87)(2046,72)(2002,57)
	(1955,44)(1905,33)(1853,24)
	(1798,17)(1742,13)(1684,12)
	(1626,14)(1570,19)(1516,27)
	(1464,37)(1415,49)(1368,63)
	(1324,78)(1282,94)(1243,112)
	(1205,130)(1168,150)(1133,170)
	(1099,191)(1067,212)(1036,233)
	(1006,254)(978,275)(952,295)
	(928,314)(906,331)(886,347)
	(869,361)(855,373)(844,383)
	(835,390)(822,402)
\blacken\thicklines
\path(946.634,337.987)(822.000,402.000)(895.763,282.877)(946.634,337.987)
\put(536,582){\makebox(0,0)[lb]{\smash{{\SetFigFont{7}{8.4}{\rmdefault}{\mddefault}{\updefault}$0$}}}}
\put(1624,582){\makebox(0,0)[lb]{\smash{{\SetFigFont{7}{8.4}{\rmdefault}{\mddefault}{\updefault}$1$}}}}
\put(2711,582){\makebox(0,0)[lb]{\smash{{\SetFigFont{7}{8.4}{\rmdefault}{\mddefault}{\updefault}$2$}}}}
\put(2120,747){\makebox(0,0)[lb]{\smash{{\SetFigFont{7}{8.4}{\familydefault}{\mddefault}{\updefault}$a$}}}}
\put(957,784){\makebox(0,0)[lb]{\smash{{\SetFigFont{7}{8.4}{\familydefault}{\mddefault}{\updefault}$a,b$}}}}
\put(1039,1219){\makebox(0,0)[lb]{\smash{{\SetFigFont{7}{8.4}{\familydefault}{\mddefault}{\updefault}$b$}}}}
\thinlines
\put(591,649){\ellipse{630}{630}}
\end{picture}
}
 \end{center}
 \caption{The DFA of a language with maximal quotient complexities of atoms, but not maximal syntactic complexity.}  
 \label{fig:wit}
 \end{figure}

\medskip

The remainder of the paper is devoted to the proof of Theorem~\ref{thm:main}.

\section{Definitions}
\subsection{Automata and \'Atomata}
A~\emph{nondeterministic finite automaton (NFA)} is a quintuple 
$\cN=(Q, \Sig, \eta, I,F)$, where 
$Q$ is a finite, non-empty set of \emph{states}, 
$\Sig$ is a finite non-empty \emph{alphabet}, 
$\eta:Q\times \Sig\to 2^Q$ is the  \emph{transition function},
$I\subseteq  Q$ is the set of  \emph{initial states},
and $F\subseteq Q$ is the set of \emph{final states}.
For $a$ in $\Sig$, let $\eta_a : Q \rightarrow 2^Q$ be defined by $\eta_a(q) = \eta(q,a)$ for $q \in Q$. 
For $a \in \Sig$, $x \in \Sig^*$, and $w = xa$, define $\eta_w: Q \rightarrow 2^Q$ inductively by $\eta_w(q) = \eta_a(\eta_x(q))$. 

For any function $f : X \rightarrow Y$, we extend $f$ to subsets of the domain in the natural way by letting $f(S) = \bigcup_{s \in S} f(s)$ for $S \subseteq X$. Note $f(\emp) = \emp$ for all $f$.

The \emph{language accepted} by an NFA $\cN$ is 
$L(\cN)=\{w\in\Sig^*\mid \eta(I,w)\cap F\neq \emp\}$.
Two NFAs are \emph{equivalent} if they accept the same language. 
The \emph{left language} of a state $q$  is
$L_{I,q}=\{w\in\Sig^* \mid q\in \eta(I,w)\}$.
The \emph{right language} of a state $q$  is
$L_{q,F}(\cN)=\{w\in\Sig^* \mid \eta(q,w)\cap F\neq\emp\}$.
The \emph{right language} of a set $S$ of states of $\cN$ is
$L_{S,F}(\cN)=\bigcup_{q\in S} L_{q,F}(\cN)$; so
$L(\cN)=L_{I,F}(\cN)$.
A state is \emph{unreachable} if its left language is empty and \emph{reachable} otherwise.
A set $S$ of states is \emph{strongly connected} if for all $p,q \in S$, there exists $w \in \Sig^*$ such that $\eta(p,w) = q$.
An NFA is \emph{minimal} if it has the minimal number of states among all
the equivalent NFAs.

A \emph{deterministic finite automaton (DFA)} is a quintuple 
$\cD=(Q, \Sig, \delta, q_0,F)$, where
$Q$, $\Sig$, and $F$ are as in an NFA, 
$\delta:Q\times \Sig\to Q$ is the transition function, 
and $q_0$ is the initial state. 
It is clear that a DFA is a special type of NFA, so the definitions stated above for NFAs also apply to DFAs.
It is well-known that for every regular language $L$, there exists a unique (up to isomorphism) minimal DFA. Furthermore, there is a one-to-one correspondence between the states of the minimal DFA and the quotients of $L$.

For an NFA $\cN$ (or DFA $\cD$), let $\cN^\rev$ (or $\cD^\rev$) denote the result of performing the \emph{reversal} operation which interchanges the final and initial states, and reverses all the transitions. 
Let $\eta^{\rev}$ (or $\delta^\rev$) denote the transition function of $\cN^{\rev}$ (or $\cD^\rev$).

Let $\cN^\deter$ denote the result of performing the \emph{determinization} operation, which is the well-known subset construction. Unreachable subsets are not included in the determinization, but the empty state, if present, is included.
Let $\eta^{\deter}$ denote the transition function of $\cN^{\deter}$. 

For $S \subseteq Q$, let $A_S$ denote the following intersection of uncomplemented and complemented quotients:
\begin{equation}
\label{eqn:atom}
A_S = \left(\bigcap_{i \in S} K_i\right) \cap \left(\bigcap_{j \in Q \setminus S} \ol{K_j}\right).
\end{equation}
An \emph{atom}~\cite{BrTa11,BrTa12} of $L$ is such an intersection $A_S$, provided it is not empty.
If the intersection with all quotients complemented is non-empty, then it constitutes the \emph{negative} atom;  
all the other atoms are \emph{positive}. 
Let $A=\{A_0,\ldots,A_{m-1}\}$ be the set of atoms of $L$,  and let the number of positive atoms be $p$.
The only atom containing $\eps$ is the one in which all the quotients containing $\eps$ are uncomplemented and all the remaining quotients  are complemented. 
This atom is called \emph{final}, and is $A_{p-1}$ by convention.
The negative atom can never be final if $L$ is non-empty, since there must be at least one final quotient in its intersection.
Atoms containing $L$, rather than $\ol{L}$ in their intersection  are called \emph{initial}.

We use the one-to-one correspondence between atoms $A_i$ and \emph{atom symbols}
$\bA_i$. Let $\bA=\{\bA_0,\ldots,\bA_{m-1}\}$ be the set of atom symbols.

\begin{definition}
\label{def:atomaton}
The \emph{\'atomaton} of $L$
 is the NFA $\cA=(\bA,\Sig,\eta,\bA_I, \{\bA_{p-1}\}),$
 where $\bA$ is the set of atom symbols,
 $\bA_I$ corresponds to the set of initial atoms, 
 $\bA_{p-1}$ corresponds to the final atom,
 and $\bA_j \in \eta(\bA_i, a)$ if and only if 
$aA_j \subseteq A_i$, for all $\bA_i,\bA_j \in \bA$ and $a\in\Sig$.
\end{definition}
In the \'atomaton, the right language of any state $\bA_i$ is the atom $A_i$~\cite{BrTa11}.
Also, all the positive atoms are reachable, but the negative atom is not.

It was shown in~\cite{BrTa11,BrTa12}  that $\cA^\rev$ is a minimal DFA 
that accepts $L^R$, and  that $\cA^\rev$
is isomorphic to $\cD^{\rev\deter}$.
The following makes this isomorphism precise~\cite{BrTa12}:

\begin{proposition}[\'Atomaton Isomorphism]
\label{prop:isomorphism}
Let $L$ be a regular language and let $K$ be its  set of quotients.
Let $\varphi: A \to 2^K$ be the mapping assigning to state 
$\bA_j$, corresponding to atom
$A_j= \left(\bigcap_{i \in S} K_i\right) \cap \left(\bigcap_{j \in Q \setminus S} \ol{K_j}\right)$ of
 $\cA^\rev$, the set 
$S$.
Then $\varphi$ is a DFA isomorphism between $\cA^\rev$ and 
$\cD^{\rev\deter}$. 
\end{proposition}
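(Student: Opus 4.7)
The approach is to verify the four standard conditions for $\varphi$ to be a DFA isomorphism: a bijection between the state set of $\cA^\rev$ and the state set of $\cD^{\rev\deter}$, preservation of the initial state, preservation of the set of final states, and commutation with transitions on each input letter.

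First I would identify the reachable subsets of $\cD^{\rev\deter}$ in quotient-theoretic terms. Its initial state is $F$, the set of final states of $\cD$, and a short induction on $|w|$ using $\eta^\rev(q,a)=\{p:\delta(p,a)=q\}$ gives $\delta^{\rev\deter}(F,w)=\{p\in Q:\delta(p,w^R)\in F\}=\{p:w^R\in K_p\}$. Hence $S\subseteq Q$ is reachable in $\cD^{\rev\deter}$ if and only if some word $u$ (namely $u=w^R$) lies in $K_p$ for every $p\in S$ and in $\ol{K_p}$ for every $p\notin S$, equivalently $A_S\ne\emp$. Since each atom has a unique expression $A_S$, the assignment $\varphi$ is a bijection from the set of atom symbols onto the state set of $\cD^{\rev\deter}$.

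For the initial and final conditions I would argue directly. The unique initial state of $\cA^\rev$ is the final atom $\bA_{p-1}$ of $\cA$, i.e., the atom containing $\eps$; since $\eps\in K_i$ iff $q_i\in F$, this atom is $A_F$, so $\varphi(\bA_{p-1})=F$, the initial state of $\cD^{\rev\deter}$. The final states of $\cA^\rev$ are the initial atoms of $\cA$, those with $L=K_{0}$ uncomplemented, whose $\varphi$-images are exactly the subsets $S$ with $0\in S$; these are precisely the final states of $\cD^{\rev\deter}$.

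The main computation is transition preservation. By Definition~\ref{def:atomaton} together with reversal, $\bA_i\in\eta^\rev(\bA_j,a)$ iff $aA_j\subseteq A_i$. Using the quotient identity $a^{-1}K_p=K_{\delta(p,a)}$, I would show that for any non-empty atom $A_S$ and any letter $a$, the condition $aA_S\subseteq A_T$ is equivalent to the combinatorial identity $T=\{p\in Q:\delta(p,a)\in S\}$: fixing any $x\in A_S$, $ax\in A_T$ iff for every $p$, $ax\in K_p\Leftrightarrow p\in T$; rewriting $ax\in K_p$ as $x\in K_{\delta(p,a)}$ and using $x\in A_S$ to replace this by $\delta(p,a)\in S$ yields the claim. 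But $\{p:\delta(p,a)\in S\}$ is exactly $\delta^{\rev\deter}(S,a)$ by definition of the reverse subset construction, so $\varphi(\eta^\rev(\bA_j,a))=\delta^{\rev\deter}(\varphi(\bA_j),a)$. The main obstacle is precisely this passage from the language-level inclusion $aA_S\subseteq A_T$ to the finite combinatorial identity on $S,T$; it relies on the non-emptiness of $A_S$ (so that a single witness $x$ pins down membership in every $K_p$ simultaneously) and on $a^{-1}K_p=K_{\delta(p,a)}$. Once this is in place, the remaining parts of the isomorphism are bookkeeping.
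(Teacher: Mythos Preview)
The paper does not actually prove Proposition~\ref{prop:isomorphism}; it is quoted from~\cite{BrTa12} (see the sentence ``The following makes this isomorphism precise~\cite{BrTa12}'' immediately preceding the statement), so there is no in-paper proof to compare against.

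Your argument is correct and is the standard direct verification one would expect. The identification $\delta^{\rev\deter}(F,w)=\{p:w^R\in K_p\}$ yields the bijection between atoms and reachable subsets; the initial/final conditions are the routine checks you give; and the transition step is exactly the right computation. One small point worth making explicit in the transition argument: you should note that the set $T=\{p:\delta(p,a)\in S\}$ you produce is independent of the witness $x\in A_S$, so in fact $aA_S\subseteq A_T$ (not merely $ax\in A_T$), and since $aA_S\ne\emptyset$ and atoms are pairwise disjoint, this $T$ is the \emph{unique} index with $aA_S\subseteq A_T$. This simultaneously verifies that $\cA^\rev$ is deterministic and that $\varphi$ commutes with the transitions. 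With that made explicit, the proof is complete.
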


\begin{corollary}
\label{cor:isomorphism}
The mapping $\varphi$ is an NFA isomorphism between 
$\cA$ and $\cD^{\rev\deter\rev}$.
\end{corollary}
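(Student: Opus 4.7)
The plan is to deduce the corollary directly from Proposition~\ref{prop:isomorphism} by applying the reversal operation to both sides of that isomorphism. The argument rests on two routine facts: reversal is an involution on finite automata, and any automaton isomorphism induces an isomorphism of the reverse automata (with a DFA isomorphism possibly weakening to an NFA isomorphism, since reversing a DFA can destroy determinism).

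Concretely, Proposition~\ref{prop:isomorphism} asserts that $\varphi$ is a DFA isomorphism between $\cA^\rev$ and $\cD^{\rev\deter}$: a bijection on states that respects the initial state, the set of final states, and the transition function. I would first establish the general fact that if $\psi$ is an automaton isomorphism from $\cN_1$ to $\cN_2$, then the same underlying state bijection is an automaton isomorphism from $\cN_1^\rev$ to $\cN_2^\rev$. Reversal swaps the initial and final sets and reverses every transition; both operations are preserved by any state bijection that already respects the original initial set, final set, and transitions, so this is essentially a one-line verification (a transition $(p,a,q)$ in $\cN_1$ matches $(\varphi(p),a,\varphi(q))$ in $\cN_2$, hence $(q,a,p)$ in $\cN_1^\rev$ matches $(\varphi(q),a,\varphi(p))$ in $\cN_2^\rev$).

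Applying this lemma with $\psi = \varphi$, $\cN_1 = \cA^\rev$, and $\cN_2 = \cD^{\rev\deter}$ yields that $\varphi$ is an isomorphism between $(\cA^\rev)^\rev$ and $(\cD^{\rev\deter})^\rev$. Since reversal is an involution, $(\cA^\rev)^\rev = \cA$, and by the notation fixed in the excerpt, $(\cD^{\rev\deter})^\rev = \cD^{\rev\deter\rev}$. Hence $\varphi$ is an NFA isomorphism between $\cA$ and $\cD^{\rev\deter\rev}$, as claimed. It is only an NFA isomorphism, and not a DFA isomorphism, because the reverse of a DFA is in general nondeterministic; this explains why $\cA$ and $\cD^{\rev\deter\rev}$ must be regarded as NFAs here. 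There is no real obstacle: the corollary is simply Proposition~\ref{prop:isomorphism} viewed through one application of reversal.
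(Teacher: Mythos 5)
Your proof is correct and matches the paper's intent: the paper states this as an immediate corollary of Proposition~\ref{prop:isomorphism} (with no written proof), and the implicit argument is exactly yours --- reverse both automata, use that isomorphisms are preserved under reversal and that reversal is an involution. Your remark on why the result weakens from a DFA isomorphism to an NFA isomorphism is a nice touch but does not change the argument.
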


\subsection{Transformations}
A {\em transformation} of a set $Q$ is a mapping of $Q$ into itself. We consider only transformations $t$ of a finite set $Q$.
For a transformation $t$ of $Q$ and a subset $S$ of $Q$, let $t^{-1}(S) = \{q \in Q \mid \text{there exists } i \in S \text{ such that } t(q) = i\}$. We say $t^{-1}(S)$ is the \emph{preimage} of $S$ under $t$: the maximal set of elements of $Q$ that is mapped onto $S$ by $t$. When discussing preimages of singletons such as $t^{-1}(\{i\})$, we drop the braces and write $t^{-1}(i)$. If $P \subseteq Q$ is in the set $\preim t = \{ P \mid \text{there exists } S \subseteq Q \text{ such that } P = t^{-1}(S)\}$, then we say $P$ is \emph{a preimage} of $t$ (as opposed to calling it \emph{the preimage} of some $S$). The set $\preim t$ is the set of all preimages of $t$.

The \emph{image} of $t$ is $\im t = \{q \in Q \mid \text{there exists } p \in Q \text{ such that } t(p) = q\}$; this is the subset of $Q$ that $t$ maps onto. The \emph{coimage} of $t$ is $\coim t = Q \setminus \im t$; this is  the set of elements of $Q$ that are not mapped onto $\im t$. For $P \subseteq Q$, the set $t(P)$ obtained by applying $t$ to each element of $P$ is called the \emph{image of $P$ under $t$}.


A transformation $t$ is a \emph{cycle} of length $k$, where $k \ge 2$, if there exist pairwise different elements $i_1,\ldots,i_k$ such that
$t(i_1)=i_2,  t(i_2)=i_3,\ldots, t(i_{k-1})=i_k$, and $t(i_k)=i_1$, and the remaining elements are mapped to themselves.
A~cycle is denoted by $(i_1,i_2,\ldots,i_k)$.
For $i<j$, a \emph{transposition} is the cycle $(i,j)$.
A~\emph{singular} transformation, denoted by $(i\rightarrow j)$, has $t(i)=j$ and $t(h)=h$ for all $h\neq i$.
A~\emph{constant} transformation,  denoted by $(Q \rightarrow j)$, has $t(i)=j$ for all $i$.
If $s$ and $t$ are transformations, the \emph{composition} $s \circ t$ is defined by $s \circ t(i) = s(t(i))$.

%
%
%
%

\section{Proof of the Main Result}

To establish Theorem~\ref{thm:main}, we need several intermediate results. 
In the sequel we represent the states of the \'atomaton $\cA$ of a regular language $L$ by sets of quotients of $L$, that is, by sets of states of the minimal DFA $\cD$ recognizing $L$, as allowed by Proposition~\ref{prop:isomorphism}.
Since the states of $\cA$ are sets of states and $\cA$ is an NFA, the outputs of $\cA$'s transition function are \emph{sets of sets} of states. To reduce confusion, we refer to these as \emph{collections of sets} of states.

In some case, the collections of sets that arise as outputs of $\cA$'s transition function can be described as ``intervals''. If $U$ and $V$ are sets, the \emph{interval $[V,U]$ between $V\!$ and $U$} is the collection of all subsets of $U$ that contain $V$. Note that if $V$ is not a subset of $U$, this interval is empty. 

\subsection{Transition Function of the \'Atomaton}
\begin{lemma}
\label{lem:atrans}
Let $L\subseteq\Sig^*$ be a regular language  with quotient complexity $n$ and syntactic complexity $n^n$. Let $\cD$ be the minimal DFA for $L$ with state set $Q$ and transition function $\delta$. Let $\cA$ be the \'atomaton of $L$ with transition function $\eta$. 

\be
\item
\label{lem:atrans:letter}
Let $S \subseteq Q$ and $a \in \Sig$. Then the transition function $\eta$ of $\cA$ satisfies:
\[ \eta_a(S) = 
\begin{cases}
[\delta_a(S),\delta_a(S) \cup \coim\delta_a],&
\text{ if $S \in \preim \delta_a$; }\\
\emp,&\text{ otherwise. }
\end{cases} \]

\item
\label{lem:atrans:preim}
Let $U,V \subseteq Q$ and let $a \in \Sig$. If every set in the interval $[V,U]$ is a preimage of $\delta_a$, then the transition function $\eta$ of $\cA$ satisfies:
\[ \eta_a([V,U]) = [\delta_a(V),\delta_a(U) \cup \coim\delta_a]. \]

\item
\label{lem:atrans:perm}
Let $U,V \subseteq Q$ and let $w \in \Sig^*$. If $\delta_w$ is a permutation, then the transition function $\eta$ of $\cA$ satisfies:
\[ \eta_w([V,U]) = [\delta_w(V),\delta_w(U)]. \]
\ee
\end{lemma}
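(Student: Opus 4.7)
The plan is to unwind the isomorphism of Corollary~\ref{cor:isomorphism}, under which states of $\cA$ correspond to subsets of $Q$ via $\varphi$, to obtain an explicit formula for $\eta_a$. In $\cD^{\rev\deter}$, the transition from a subset $T \subseteq Q$ on letter $a$ is $\delta_a^{-1}(T)$, so reversing this DFA gives $T \in \eta_a(S)$ if and only if $\delta_a^{-1}(T) = S$. Thus $\eta_a(S) = \emp$ unless $S \in \preim \delta_a$. When $S$ is a preimage, the equation $\delta_a^{-1}(T) = S$ imposes three constraints on $T$: each $q \in S$ forces $\delta_a(q) \in T$, so $\delta_a(S) \subseteq T$; each $q \notin S$ forces $\delta_a(q) \notin T$, so $T$ is disjoint from $\delta_a(Q \setminus S)$; and elements of $\coim \delta_a$ are unconstrained since their preimage is empty. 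Since $S \in \preim \delta_a$ makes $\delta_a(S)$ and $\delta_a(Q \setminus S)$ disjoint, $Q$ partitions as $\delta_a(S) \sqcup \delta_a(Q \setminus S) \sqcup \coim \delta_a$, so the admissible $T$ are precisely those in $[\delta_a(S), \delta_a(S) \cup \coim \delta_a]$, establishing part~(\ref{lem:atrans:letter}).

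For part~(\ref{lem:atrans:preim}), apply part~(\ref{lem:atrans:letter}) termwise to get $\eta_a([V, U]) = \bigcup_{V \subseteq S \subseteq U} [\delta_a(S), \delta_a(S) \cup \coim \delta_a]$. Monotonicity of $\delta_a$ on subsets yields containment in $[\delta_a(V), \delta_a(U) \cup \coim \delta_a]$ immediately. For the reverse inclusion, given $T$ in this target interval, the natural witness is $S = \delta_a^{-1}(T)$, which is automatically a preimage of $\delta_a$. The bound $\delta_a(V) \subseteq T$ gives $V \subseteq S$, and the identity $\delta_a(S) = T \cap \im \delta_a$ places $T$ correctly between $\delta_a(S)$ and $\delta_a(S) \cup \coim \delta_a$. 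The delicate step, and the main obstacle of the argument, is showing $S \subseteq U$: here the hypothesis that every element of $[V, U]$ is a preimage is essential, since $U \in \preim \delta_a$ gives $\delta_a^{-1}(\delta_a(U)) = U$. Then $q \in S$ implies $\delta_a(q) \in T \cap \im \delta_a \subseteq \delta_a(U)$, whence $q \in \delta_a^{-1}(\delta_a(U)) = U$.

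For part~(\ref{lem:atrans:perm}), induct on $|w|$. On the finite set $Q$, if $\delta_w = \delta_a \circ \delta_x$ is a permutation, then $\delta_x$ is injective and $\delta_a$ is surjective, so both are bijective; hence every letter transformation occurring in $w$ is a permutation. The base case $|w| = 1$ follows from part~(\ref{lem:atrans:preim}): a permutation $\delta_a$ has $\coim \delta_a = \emp$ and admits every subset as a preimage, so $\eta_a([V, U]) = [\delta_a(V), \delta_a(U)]$. For the inductive step with $w = xa$, the inductive hypothesis applied to $x$ gives $\eta_x([V, U]) = [\delta_x(V), \delta_x(U)]$, and the base case applied to $a$ then yields $\eta_w([V, U]) = [\delta_a(\delta_x(V)), \delta_a(\delta_x(U))] = [\delta_w(V), \delta_w(U)]$.
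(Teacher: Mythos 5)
Your proof is correct. Parts~(\ref{lem:atrans:letter}) and~(\ref{lem:atrans:perm}) follow essentially the paper's own route: part~(\ref{lem:atrans:letter}) computes $\delta_a^{\rev\deter}(T)=\delta_a^{-1}(T)$ and reverses, and your packaging via the partition $Q=\delta_a(S)\sqcup\delta_a(Q\setminus S)\sqcup\coim\delta_a$ is a slightly tidier version of the paper's subset/superset case analysis; part~(\ref{lem:atrans:perm}) is the same induction on $|w|$ (you are in fact a bit more careful than the paper in justifying that every letter of $w$ induces a permutation). One point you gloss over in part~(\ref{lem:atrans:letter}): for $\eta_a(S)$ to be defined for \emph{every} $S\subseteq Q$, every subset must actually be a state of $\cA$; the paper secures this by invoking Proposition~\ref{prop:SWY} (maximal syntactic complexity forces all $2^n$ subsets to be reachable in $\cD^{\rev\deter}$, hence $2^n$ atoms), and this is the only place the hypothesis on syntactic complexity enters the lemma, so it should be stated explicitly. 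Part~(\ref{lem:atrans:preim}) is where you genuinely diverge: the paper proves it by induction on $|[V,U]|$, splitting $[V,U]=[V\cup\{u\},U]\cup[V,U\setminus\{u\}]$ and reassembling the two resulting intervals, which needs a short argument that $\delta_a(U\setminus\{u\})=\delta_a(U)\setminus\{\delta_a(u)\}$ via maximality of preimages. Your direct double inclusion, with the canonical witness $S=\delta_a^{-1}(T)$ and the observation that $U\in\preim\delta_a$ forces $\delta_a^{-1}(\delta_a(U))=U$, is shorter and isolates exactly where the preimage hypothesis is used (indeed only for $U$; the forward inclusion holds regardless, since non-preimages contribute $\emp$). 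Both your argument and the paper's tacitly assume $V\subseteq U$ --- when $V\not\subseteq U$ the left-hand side is empty while the stated right-hand interval need not be --- but the lemma is only ever applied with $V\subseteq U$, so this imprecision is inherited from the paper rather than introduced by you.
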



\begin{proof}
(\ref{lem:atrans:letter}):
In $\cD^{\rev}$, the letter $a$ induces the function $\delta_a^\rev : Q \rightarrow 2^{Q}$ which maps each state $i$ to its preimage $\delta_a^{-1}(i)$. Furthermore, by Proposition~\ref{prop:SWY}, every subset of $Q$ is reachable in $\cD^{\rev\deter}$ since $L$ has maximal syntactic complexity. So the set of states of $\cD^{\rev\deter}$ is $2^Q$, and the empty set of states of $\cD$ is a state of $\cD^{\rev\deter}$. Thus in $\cD^{\rev\deter}$, the letter $a$ induces the function $\delta_a^{\rev\deter} : 2^{Q} \rightarrow 2^{Q}$, defined as follows:
\begin{equation}
\label{eqn:deltaRD}
\delta_a^{\rev\deter}(S) =
\displaystyle{\bigcup_{i \in S} \delta_a^{-1}(i)} = \delta_a^{-1}(S).
\end{equation}
In $\cD^{\rev\deter\rev}$, $a$ induces the function $\delta^{\rev\deter\rev}_a : 2^{Q} \rightarrow 2^{2^Q}$. This function maps a subset $S$ of $Q$ to its preimage under $\delta_a^{\rev\deter}$, that is, to the collection of sets each of which maps to $S$ under $\delta_a^{\rev\deter}$. Since $\cD^{\rev\deter\rev}$ is isomorphic to $\cA$, $\delta^{\rev\deter\rev}$ is equivalent to $\eta$. We now show this function satisfies the statement from the lemma.

Notice that if $S \not\in \preim\delta_a$, then $S$ cannot be an output of $\delta_a^{\rev\deter}$. It follows that $\delta_a^{\rev\deter\rev}(S) = \emp$, since the collection of sets that map to $S$ under $\delta_a^{\rev\deter}$ is empty.

Conversely, suppose $S \in \preim\delta_a$. Then clearly $S$ is the preimage of $\delta_a(S)$ under $\delta_a$. It follows by Equation (\ref{eqn:deltaRD}) that $\delta^{\rev\deter}_a(\delta_a(S)) = S$, and thus $\delta_a(S)$ is in the collection of sets produced by $\delta_a^{\rev\deter\rev}(S) = \eta_a(S)$.

Consider which other sets map to $S$ under $\delta_a^{\rev\deter}$. 
Notice no strict subset of $\delta_a(S)$ maps to $S$; if $S$ is the preimage of $\delta_a(T) \subset \delta_a(S)$ under $\delta_a$ this means $\delta_a^{-1}(\delta_a(T)) = S$, and applying $\delta_a$ to both sides gives $\delta_a(T) = \delta_a(S)$.
A strict superset of $\delta_a(S)$, say $\delta_a(S) \cup T$, maps to $S$ only if $\delta_a^{-1}(T) \subseteq S$, since we have
\[ \delta_a^{\rev\deter}(\delta_a(S) \cup T) 
= \delta_a^{-1}(\delta_a(S) \cup T) 
= \delta_a^{-1}(\delta_a(S)) \cup \delta_a^{-1}(T)
= S \cup \delta_a^{-1}(T). \]
Suppose $\delta_a^{-1}(T)$ is non-empty. Since $\delta_a^{-1}(T) \subseteq S$, we have $T \subseteq \delta_a(S)$. Thus if $\delta_a(S) \cup T$ is a strict superset of $\delta_a(S)$, then $\delta_a^{-1}(T)$ must be empty. Therefore $T$ must be a subset of $\coim\delta_a$, since $\coim\delta_a$ contains all the elements of $Q$ with empty preimages under $\delta_a$.

In fact, for all $T \subseteq \coim\delta_a$ we have $\delta_a^{\rev\deter}(\delta_a(S) \cup T) = S$. This means that the collection of sets produced by $\delta_a^{\rev\deter\rev}(S)$ (and thus $\eta_a(S)$) is the set of all supersets of $\delta_a(S)$ which are subsets of $\delta_a(S) \cup \coim\delta_a$. Thus, as required, we have: 
\[ \eta_a(S) = [\delta_a(S),\delta_a(S) \cup \coim\delta_a]. \]

(\ref{lem:atrans:preim}):
We proceed by induction on the number of sets in the interval. If there are no sets, that is,  if $[V,U] = \emp$, then $\eta_a([V,U]) = \eta_a(\emp) = \emp$ as required. If there is only one set, say $[V,U] = [S,S] = \{S\}$, then the proof of the previous part shows the statement is true.

Suppose that  the statement holds if $|[V,U]| < k$. We must show it also holds if $|[V,U]| = k$. If $V \supset U$ then $[V,U] = \emp$, and if $V = U$ then $|[V,U]| = 1$. These are the base cases, so we can assume that $V \subset U$.

If $V \subset U$, then we have some $u \in U$ such that $u \not\in V$. Notice that we can write $[V,U]$ as $[V \cup \{u\},U] \cup [V,U \setminus \{u\}]$. Also, we have $\eta_a([V \cup \{u\},U] \cup [V,U \setminus \{u\}]) = \eta_a([V \cup \{u\},U]) \cup \eta_a([V,U \setminus \{u\}])$. It follows that:
\[ \eta_a([V,U]) = \eta_a([V \cup \{u\},U]) \cup \eta_a([V,U \setminus \{u\}]). \]
These two intervals have strictly fewer sets than $[V,U]$; so by the induction hypothesis we have:
\[ \eta_a([V \cup \{u\},U]) = [\delta_a(V \cup \{u\}),\delta_a(U) \cup \coim\delta_a] \text{, and} \]
\[ \eta_a([V,U \setminus \{u\}]) = [\delta_a(V),\delta_a(U \setminus \{u\}) \cup \coim\delta_a]. \]
Notice that $U$ and $U \setminus \{u\}$ are both in $[V,U]$, and thus are preimages of $\delta_a$. Since preimages are maximal, distinct preimages  map to distinct sets under $\delta_a$. Thus $\delta_a(U \setminus \{u\}) \ne \delta_a(U)$. It follows that $\delta_a(u) \not\in \delta_a(U \setminus \{u\})$, since otherwise the two sets would be equal. Furthermore, $\delta_a(u)$ is the only element which is present in $\delta_a(U)$ but not present in $\delta_a(U \setminus \{u\})$. Thus $\delta_a(U \setminus \{u\}) = \delta_a(U) \setminus \{\delta_a(u)\}$. It follows that:
\[ \eta_a([V,U \setminus \{u\}]) = [\delta_a(V),\delta_a(U) \setminus \{\delta_a(u)\} \cup \coim\delta_a]. \]
Furthermore, noting that $\delta_a(V \cup \{u\}) = \delta_a(V) \cup \{\delta_a(u)\}$, we have:
\[ \eta_a([V \cup \{u\},U]) = [\delta_a(V) \cup \{\delta_a(u)\},\delta_a(U) \cup \coim\delta_a]. \]
Thus, as required, the union of these two intervals is:
\[ \eta_a([V,U \setminus \{u\}])  \cup \eta_a([V \cup \{u\},U])
= [\delta_a(V),\delta_a(U) \cup \coim\delta_a] .\]

(\ref{lem:atrans:perm}):
We proceed by induction on the length of $w$.
Every subset of $Q$ is a preimage of $\delta_w$, since $\delta_w$ is a permutation. Also, $\coim\delta_w = \emp$. Thus the base case (where $w$ is a single letter) is covered by the proof of the previous part.

Now suppose $w = a_1a_2\dotsb a_k$ and the lemma holds for words of length less than $k$. Let $w' = a_1a_2\dotsb a_{k-1}$. By the inductive hypothesis, we have
\[ \eta_{w'}([V,U]) = [\delta_{w'}(V),\delta_{w'}(U)]. \]
Notice that $\delta_w = \delta_{a_k} \circ \delta_{w'}$, and similarly $\eta_w = \eta_{a_k} \circ \eta_{w'}$. Furthermore, $\delta_{a_k}$ must be a permutation (or else $\delta_w$ would not be a permutation). Thus by Part \ref{lem:atrans:preim} of this lemma, we have:
\begin{align*}
\eta_w([V,U]) 
&= \eta_{a_k}(\eta_{w'}([V,U])) 
= \eta_{a_k}\left([\delta_{w'}(V),\delta_{w'}(U)]\right) \\
&= [\delta_{a_k}(\delta_{w'}(V)),\delta_{a_k}(\delta_{w'}(U))]
= [\delta_{w}(V),\delta_{w}(U)].
\end{align*}
This proves that the statement holds for $k$ and thus for all natural numbers.
\qed
\end{proof}

\begin{example}
Consider the DFA $\cD$ with $Q = \{0,1,2\}$, $\Sig = \{a,b,c,d\}$, $q_0 = 0$, $F = \{2\}$, and transition function $\delta$ defined by $\delta_a = (0,1)$, $\delta_b = (1,2)$, $\delta_c = {(2 \rightarrow 0)}$, and $\delta_d = {(Q \rightarrow 1)}$. The language $L = L(\cD)$ has syntactic complexity $n^n=3^3=27$. 

The transition functions of $\cD$, $\cD^{\rev}$, $\cD^{\rev\deter}$ and $\cA = \cD^{\rev\deter\rev}$ are shown in Tables \ref{ex:firsttable} to \ref{ex:lasttable}. For conciseness, we represent  sets like $\{0,1,2\}$ and $\{0,2\}$ by $012$ and $02$, respectively, and  collections of sets like $\{\{0\},\{0,1\},\{0,2\},\{0,1,2\}\}$, by $0,01,02,012$. We use $\Phi$ to denote the ``empty-set state'' that arises when performing determinization of an NFA $\cN$ (that is, the state in $\cN^{\deter}$ which corresponds to the empty subset of states of $\cN$) and $\emp$ to denote the actual empty set. The arrows in the leftmost column of each table denote initial states $(\rightarrow)$ and final states $(\leftarrow)$.
\renewcommand{\arraystretch}{1.1}
\begin{table}
\begin{minipage}[c]{0.5\linewidth}
\centering
\caption{DFA $\cD$.}
\begin{tabular}{|c|c||c|c|c|c|}
\hline
 & \ \ $\delta$ \ \
    &\ \ $a$\ \ &\ \ $b$\ \ &\ \ $c$\ \ &\ \ $d$\ \ \\\hline\hline
$\rightarrow$ & 0   &1  &0  &0  &1  \\\hline
& 1   &0  &2  &1  &1  \\\hline
$\leftarrow$ & 2   &2  &1  &0  &1  \\\hline
\end{tabular}
\label{ex:firsttable}
\end{minipage}
\begin{minipage}[c]{0.5\linewidth}
\centering
\caption{NFA $\cD^\rev$.}
\begin{tabular}{|c|c||c|c|c|c|}
\hline
& \ \ $\delta^{\rev}$ \ \
    &\ \ $a$ \  \  &\ \ $b$ \  \ &\ \ $c$  \ \ &\ \ $d$ \  \ \\\hline\hline
$\leftarrow$ & 0   &1      &0      &02     &$\emp$ \\\hline
& 1   &0      &2      &1      &012    \\\hline
$\rightarrow$ & 2   &2      &1      &$\emp$ &$\emp$ \\\hline
\end{tabular} 
\end{minipage}
\end{table}

\begin{table}
\begin{minipage}[c]{0.40\linewidth}
\caption{$\cD^{\rev\deter}$.}
\centering
\begin{tabular}{|c|c||c|c|c|c|}
\hline
& \  $\delta^{\rev\deter}$ \ 
        &\ \ $a$  \ \ &\ \ $b$ \ \  &\ \ $c$ \ \  &\ \ $d$  \ \ \\\hline\hline
& $\Phi$  &$\Phi$ &$\Phi$ &$\Phi$ &$\Phi$ \\\hline
$\leftarrow$ & 0       &1      &0      &02     &$\Phi$ \\\hline
& 1       &0      &2      &1      &012    \\\hline
$\rightarrow$ & 2       &2      &1      &$\Phi$ &$\Phi$ \\\hline
$\leftarrow$ & 01      &01     &02     &012    &012    \\\hline
$\leftarrow$ & 02      &12     &01     &02     &$\Phi$ \\\hline
& 12      &02     &12     &1      &012    \\\hline
$\leftarrow$ & 012     &012    &012    &012    &012    \\\hline
\end{tabular} 
\end{minipage}
\hspace{.05cm}
\begin{minipage}[c]{0.60\linewidth}
\centering
\caption{$\cA = \cD^{\rev\deter\rev}$.}
\begin{tabular}{|c|c||c|c|c|c|}
\hline
&  $\eta = \delta^{\rev\deter\rev}$ 
        &$a$                &$b$                &$c$                &$d$                \\\hline\hline
& $\Phi$  &$\Phi$             &$\Phi$             &$\Phi$,2           &$\Phi$,0,2,02      \\\hline
$\rightarrow$ & 0       &1                  &0                  &$\emp$             &$\emp$             \\\hline
& 1       &0                  &2                  &1,12               &$\emp$             \\\hline
$\leftarrow$ & 2       &2                  &1                  &$\emp$             &$\emp$             \\\hline
$\rightarrow$ & 01      &01                 &02                 &$\emp$             &$\emp$             \\\hline
$\rightarrow$ & 02      &12                 &01                 &0,02               &$\emp$             \\\hline
& 12      &02                 &12                 &$\emp$             &$\emp$             \\\hline
$\rightarrow$ & 012     &\ 012    \            &\ 012    \            &\ 01,012  \           &\ 1,01,12,012 \  \\\hline
\end{tabular}
\label{ex:lasttable}
\end{minipage}
\end{table}

One can check that the definition of the transition function $\eta = \delta^{\rev\deter\rev}$ of the \'atomaton matches that of Part \ref{lem:atrans:letter} of the lemma. For example, we have $\eta_d(\{0,1,2\}) = \{\{1\},\{0,1\},\{1,2\},\{0,1,2\}\} = [\{1\},\{0,1,2\}]$. The lower bound of this interval is $\{1\}$ = $\delta_d(\{0,1,2\})$. Since $\coim\delta_d = \{0,2\}$, the upper bound of this interval is $\delta_d(\{0,1,2\}) \cup \coim\delta_d = \{1\} \cup \{0,2\} = \{0,1,2\}$.

Notice that $\{0,1,2\}$ is a preimage of $\delta_d$ (in particular, $\delta^{-1}_d(1) = \{0,1,2\}$) so $\eta_d(\{0,1,2\})$ is not the empty set. The only other preimage of $\delta_d$ is $\Phi$, and we have $\eta_d(\{\Phi\}) = [\Phi,\{0,2\}]$ as required. For all other subsets $S$ of $\{0,1,2\}$, we see that $S$ is not a preimage of $\delta_d$ and $\eta_d(S) = \emp$ as required.
\qedb
\end{example}

\subsection{Strong Connectedness and Reachability}
To show that each atom has maximal quotient complexity if the associated language has maximal syntactic complexity, we follow the approach of~\cite{BrTa12}. 
Let $L\subseteq \Sig^*$ be a regular language, let $\cD$ be the quotient DFA for $L$ with state set $Q$, and let $\cA$ be the \'atomaton of $L$. For $S \subseteq Q$, we derive $\cA^{\deter}_S$ (the minimal DFA of the atom $A_S$) by making  $S$ the starting state of $\cA$, and then determinizing. 
The initial state of $\cA^{\deter}_S$ is $\{S\}$, or equivalently the interval $[S,S]$. To prove the quotient complexity of $A_S$ is maximal, we use our results on the transition function of $\cA$ (Lemma \ref{lem:atrans}) to count the number of intervals that are reachable from $[S,S]$ in $\cA^{\deter}_S$. If the number of reachable intervals meets the quotient complexity bound for the atom $A_S$, it follows $A_S$ has maximal quotient complexity.

First we prove the following lemma:
\begin{lemma}
\label{lem:denes}
Let $L\subseteq \Sig^*$ be a regular language  with quotient complexity $n$ and syntactic complexity $n^n$. Let $\cD$ be the minimal DFA of $L$ with transition function $\delta$ and state set $Q$. Then there exists $a \in \Sig$ and $w \in \Sig^*$ such that $\delta_a = \alpha \circ \delta_w$, where $\alpha$ is a singular transformation and $\delta_w$ is a permutation.
\end{lemma}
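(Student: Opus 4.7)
The plan is to show that some letter $a \in \Sig$ must induce a transformation whose image has exactly $n-1$ elements, and then to decompose that transformation explicitly as a singular map composed with a permutation realized by some word.

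First, since the syntactic complexity of $L$ is $n^n$, the transition semigroup of $\cD$ equals the full transformation monoid $T_n$ on $Q$. A basic observation is that image size is non-increasing under composition: $|\im(s \circ t)| \le \min(|\im s|, |\im t|)$. Consequently, if every non-permutation letter of $\Sig$ had $|\im \delta_a| \le n-2$, then each word would induce either a permutation (image size $n$) or a transformation of image size at most $n-2$, so no transformation of image size exactly $n-1$ could appear in the transition semigroup. Since $T_n$ contains such transformations, some letter $a$ must satisfy $|\im \delta_a| = n-1$.

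Given such an $a$, write $t = \delta_a$. There is a unique $k \in \coim t$ and a unique pair $i \neq j$ with $t(i) = t(j) = m$ for some $m$, while $t$ restricted to $Q \setminus \{i,j\}$ is a bijection onto $Q \setminus \{k,m\}$. Define the permutation $\pi$ of $Q$ by $\pi(x) = t(x)$ for $x \notin \{i,j\}$, $\pi(i) = k$, and $\pi(j) = m$; the preceding observations confirm $\pi$ is a bijection. Setting $\alpha = (k \to m)$, a direct case check on $x \in Q \setminus \{i,j\}$, $x = i$, and $x = j$ yields $\alpha \circ \pi = t$. Because the transition semigroup equals $T_n$, some word $w \in \Sig^*$ satisfies $\delta_w = \pi$ (one may take $w = \varepsilon$ in the degenerate case $\pi = \mathrm{id}$), completing the decomposition $\delta_a = \alpha \circ \delta_w$.

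The main obstacle is the first step---pinning down a letter of image size exactly $n-1$. The image-size monotonicity argument is essential: without it the pathological case in which every non-permutation letter loses two or more elements of image would block this direct construction, since composing a singular $\alpha$ with a permutation can only yield a transformation of image size $n-1$, never smaller.
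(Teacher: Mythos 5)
Your proposal is correct and follows essentially the same route as the paper: the image-size monotonicity argument to extract a letter $a$ with $|\im\delta_a|=n-1$, followed by an explicit decomposition of $\delta_a$ as a singular transformation composed with a permutation (your $\pi$, sending one element of the colliding pair to the coimage element, is exactly the paper's $\delta_w$). The only cosmetic difference is that the paper phrases the first step as a contradiction involving a specific word of image size $n-1$, while you argue directly that no such transformation could otherwise arise.
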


\begin{proof}
Let $T = \{\delta_a \mid a \in \Sig\}$. Since $L$ has syntactic complexity $n^n$, the set $T$ generates all transformations of $Q$. We claim there exists $\delta_a \in T$ such that $|\im\delta_a| = n-1$.

To see this, observe that if $s$ and $t$ are transformations with $|\im s| = k$ and $|\im t| = \ell$, then $|\im(s \circ t)| \le \min\{k,\ell\}$. 
Now suppose for a contradiction that for all $\delta_a \in T$, we have $|\im\delta_a| = n$ or $|\im\delta_a| = n-2$. Since $T$ generates all transformations of $Q$, there exists $w \in \Sig^*$ such that $|\im\delta_w| = n-1$. Clearly $w$ cannot contain any letter $b \in \Sig$ such that $|\im\delta_b| \le n-2$, or else we would have $|\im\delta_w| \le |\im\delta_b| < n-1$. It follows $w$ only contains letters $b$ such that $|\im\delta_b| = n$. Thus $\delta_w$ is a permutation, since it is a composition of permutations. But this implies $|\im\delta_w| = n$, which is a contradiction.

Thus there exists $a \in \Sig$ such that $|\im\delta_a| = n-1$. 
Suppose $\im\delta_a = \{q_1, q_2, \dotsc, q_{n-1}\}$ and $\coim\delta_a = \{q_n\}$. Since $|\im\delta_a| = n - 1$, there exists a subset $P = \{p_1, p_2, \dotsc, p_{n-1}\}$ of $Q$ such that $\delta_a(p_i) \ne \delta_a(p_j)$ for all $i,j$. Suppose without loss of generality that $\delta_a(p_i) = q_i$. 

In $Q \setminus P$ there is precisely one state, say $p_n$. Since $p_n \not\in P$, we have $\delta_a(p_n) = \delta_a(p_j) = q_j$ for exactly one $p_j \in P$. 

Recall that for all transformations $t$ of $Q$, there exists $w \in \Sig^*$ that induces $t$. Pick $w$ such that $\delta_w: Q \rightarrow Q$ satisfies $\delta_w(p_i) = q_i$ for all $p_i$.
Notice that $\delta_w$ is a permutation. Now let $\alpha: Q \rightarrow Q$ be the singular transformation ${(q_n \rightarrow q_j)}$. Then $\alpha(\delta_w(p_i)) = \alpha(q_i) = q_i$ for all $p_i \in P$, and $\alpha(\delta_w(p_n)) = \alpha(q_n) = q_j$. Thus $\alpha \circ \delta_w = \delta_a$ as required. \qed
\end{proof}

Now we can prove the main result of this section. 
We assign a \emph{type} to all non-empty intervals as follows: the type of $[V,U]$ is the ordered pair $(|V|,|U|)$. For example, $[\{1,2\},\{1,2,3,4\}]$ has type $(2,4)$ and $[\emp,\emp]$ has type $(0,0)$. The interval $[\{1,2\},\{3,4\}]$ is empty and thus has no type.

\begin{lemma}
\label{lem:reach}
Suppose that $L$ has quotient complexity $n$ and syntactic complexity $n^n$. Consider $S \subseteq Q$ and $\cA^{\deter}_S$, the minimal DFA of the atom $A_S$.
\be
\item
\label{lem:reach1}
All states of $\cA^{\deter}_S$ which are intervals of the same type are strongly connected.
\item
\label{lem:reach2}
From a state in $\cA^{\deter}_S$ which is a interval of type $(v,u)$, if $v \ge 2$ we can reach a state which is a interval of type $(v-1,u)$ and if $u \le n-2$ we can reach a state which is a interval of type $(v,u+1)$.
\ee
\end{lemma}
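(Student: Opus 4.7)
The assumption of syntactic complexity $n^n$ means every transformation of $Q$ is induced by some word, and I will exploit this freedom together with Lemma~\ref{lem:atrans}. My plan is to handle Part~\ref{lem:reach1} by moving intervals around with permutation words, and Part~\ref{lem:reach2} by combining such permutations with the collapsing letter supplied by Lemma~\ref{lem:denes}.

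For Part~\ref{lem:reach1}, I would take two intervals $[V_1,U_1]$ and $[V_2,U_2]$ of the same type $(v,u)$ and observe that the three sets $V_i$, $U_i \setminus V_i$, $Q \setminus U_i$ partition $Q$ with common sizes $v$, $u-v$, $n-u$ on both sides. Any permutation $\pi$ of $Q$ that sends each piece on the left side bijectively to the corresponding piece on the right then satisfies $\pi(V_1)=V_2$ and $\pi(U_1)=U_2$. Maximal syntactic complexity supplies a word $w$ with $\delta_w=\pi$, and Lemma~\ref{lem:atrans}(\ref{lem:atrans:perm}) then yields $\eta_w([V_1,U_1]) = [V_2,U_2]$, giving strong connectedness.

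For Part~\ref{lem:reach2}, Lemma~\ref{lem:denes} supplies a letter $a$ with $|\im\delta_a|=n-1$. Write $\coim\delta_a = \{q_n\}$ and let $\{p_j,p_n\} = \delta_a^{-1}(q_j)$ be the unique pair collapsed by $\delta_a$. A preliminary observation is that every set in an interval $[V',U']$ is a preimage of $\delta_a$ (so that Lemma~\ref{lem:atrans}(\ref{lem:atrans:preim}) can be applied) if and only if either $\{p_j,p_n\}\subseteq V'$ or $\{p_j,p_n\}\cap U' = \emp$; otherwise one can find an intermediate set in $[V',U']$ containing exactly one of $p_j,p_n$, which fails the characterisation of preimages of $\delta_a$.

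To grow the upper bound when $u\le n-2$, I would use Part~\ref{lem:reach1} to move $[V,U]$ to an interval $[V',U']$ of the same type $(v,u)$ with $\{p_j,p_n\}\subseteq Q\setminus U'$; this is possible since $|Q\setminus U'|=n-u\ge 2$. Then $\delta_a$ is injective on $U'$ and $\delta_a(U')$ is disjoint from $\coim\delta_a$, so Lemma~\ref{lem:atrans}(\ref{lem:atrans:preim}) gives $\eta_a([V',U']) = [\delta_a(V'),\,\delta_a(U')\cup\{q_n\}]$, an interval of type $(v,u+1)$. Symmetrically, to shrink the lower bound when $v\ge 2$, I would position $[V,U]$ via Part~\ref{lem:reach1} so that $\{p_j,p_n\}\subseteq V'$; then $\delta_a$ merges the pair inside both endpoints, and adjoining $\{q_n\}$ to the image of $U'$ yields an interval of type $(v-1,u)$. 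The only nontrivial step in the whole argument is the preimage observation above, which one must verify before Lemma~\ref{lem:atrans}(\ref{lem:atrans:preim}) can be invoked; once that is in hand the type arithmetic is routine.
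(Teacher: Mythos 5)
Your proposal is correct and takes essentially the same approach as the paper: Part~\ref{lem:reach1} via a permutation word carrying one interval onto the other (using Lemma~\ref{lem:atrans}(\ref{lem:atrans:perm})), and Part~\ref{lem:reach2} by first positioning the interval with Part~\ref{lem:reach1} and then applying a rank-$(n-1)$ letter via Lemma~\ref{lem:atrans}(\ref{lem:atrans:preim}). The only (harmless) variation is that the paper factors $\delta_a = \alpha \circ \delta_w$ and applies a word inducing $\delta_w^{-1}$ before applying $a$, whereas you position the interval relative to the collapsed pair of $\delta_a$ itself and apply $a$ directly; your preimage characterisation and type arithmetic are both sound.
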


\begin{proof}
(\ref{lem:reach1}): 
Since $L$ has syntactic complexity $n^n$, every permutation of $Q$ can be induced by a word in $\Sig^*$.
Let $[V_1,U_1]$ and $[V_2,U_2]$ be states of $\cA^{\deter}_S$ of the same type. We can assume that $V_1 \subseteq U_1$ and $V_2 \subseteq U_2$. Now consider $w \in \Sig^*$ and suppose $\delta_w : Q \rightarrow Q$ is a permutation that sends $V_1$ to $V_2$ and $U_1$ to $U_2$; such a permutation exists if $V_1 \subseteq U_1$ and $V_2 \subseteq U_2$. By Part \ref{lem:atrans:perm} of Lemma \ref{lem:atrans}, we have:
\[ \eta_w([V_1,U_1]) = [\delta_w(V_1),\delta_w(U_1)] = [V_2,U_2] .\]
Thus any two intervals of the same type in $\cA^{\deter}_S$ are connected by a word in $\Sig^*$.

(\ref{lem:reach2}): By Lemma \ref{lem:denes}, there exists a single letter $a \in \Sig$ and a word $w \in \Sig^*$ such that $\delta_a$ induces a transformation $\alpha \circ \delta_w$, where $\alpha$ is a singular transformation and $\delta_w$ is a permutation. Suppose $\alpha = {(k \rightarrow \ell)}$ for $k,\ell \in Q$. 

Note that a subset $S$ of $Q$ is a preimage of $\alpha$ only if $\{k,\ell\} \subseteq S$ or $\{k,\ell\} \cap S = \emp$. 
Since $\delta_a = \alpha \circ \delta_w$, it follows that $S$ is a preimage of $\delta_a$ only if $\{\delta_w^{-1}(k),\delta_w^{-1}(\ell)\} \subseteq S$ or $\{\delta_w^{-1}(k),\delta_w^{-1}(\ell)\} \cap S = \emp$. Also note that since $\delta_a = \alpha \circ \delta_w$ and $\coim\delta_w = \emp$, we have $\coim\delta_a = \coim\alpha = \{k\}$.

Let $[V,U]$ be a interval of type $(v,u)$ with $v \ge 2$. By Part \ref{lem:reach1} of this lemma, from $[V,U]$ we can reach a interval $[V',U']$ of type $(v,u)$ such that $\{k,\ell\} \subseteq V'$, and thus $k$ and $\ell$ are in every set of $[V',U']$. Since $L$ has syntactic complexity $n^n$, there exists  $x \in \Sig^*$ such that $\delta_x = \delta_w^{-1}$. By Part \ref{lem:atrans:perm} of Lemma \ref{lem:atrans}, we can apply $\eta_x$ to $[V',U']$ to obtain $[\delta_w^{-1}(V'),\delta_w^{-1}(U')]$. Every set in this interval is a preimage of $\delta_a$ since every set contains both $\delta_w^{-1}(k)$ and $\delta_w^{-1}(\ell)$.

By  Lemma \ref{lem:atrans}, Part \ref{lem:atrans:preim},  $\eta_a([\delta_w^{-1}(V'),\delta_w^{-1}(U')])$ is $[\alpha(V'),\alpha(U') \cup \{k\}]$ (since $\delta_a = \alpha \circ \delta_w$, $\delta_w$ cancels  its inverse).
Since $\{k,\ell\} \subseteq V' \subseteq U'$, we have $\alpha(V') = V' \setminus \{k\}$ and $\alpha(U') \cup \{k\} = U' \setminus \{k\} \cup \{k\} = U'$. Thus the resulting interval is $[V' \setminus \{k\},U']$, which has type $(v-1,u)$ as required.

In a similar fashion, suppose we have an interval $[V,U]$ of type $(v,u)$ such that $u \le n-2$. We can reach $[V',U']$ such that $\{k,\ell\} \cap U' = \emp$. We can then apply $\eta_x$ to get $[\delta_w^{-1}(V'),\delta_w^{-1}(U')]$. As before, each set in this interval is a preimage of $\delta_a$ since for all sets $S$ in the interval we have $\{\delta_w^{-1}(k),\delta_w^{-1}(\ell)\} \cap S = \emp$. Thus by Part \ref{lem:atrans:preim} of Lemma \ref{lem:atrans}, we can apply $\eta_a$ to get $[\alpha(V'),\alpha(U') \cup \{k\}] = [V',U' \cup \{k\}]$. This has type $(v,u+1)$ as required.
\qed
\end{proof}
\subsection{Proof of Main Theorem}
Our main theorem, restated below, now follows easily:
\mainthm*

\begin{proof}
Since $L$ has syntactic complexity $n^n$, Lemma \ref{lem:reach} holds for minimal DFAs of atoms of $L$. It was shown in~\cite{BrTa12} that if these strong-connectedness and reachability results hold, the number of reachable intervals in the minimal DFA of an atom of $L$ is equal to the maximum possible quotient complexity of the atom. Hence these results suffice to establish that each atom has maximal quotient complexity. \qed
\end{proof}

\section{Conclusions}
Maximal quotient complexity of atoms defines a new complexity class of regular languages.
We have related this new measure to syntactic complexity and quotient complexity of reversal.  Such relations are important, since they often make it  possible to avoid proofs of complexity results implied by other known complexity results. We believe that this subject deserves further study.

\end{document}